\newtheorem{theorem}{Theorem}[section]
\newtheorem{lemma}{Lemma}[section]
\begin{document}

\author{Asghar Asgharian Sardroud \thanks{Email: asgharian@aut.ac.ir}
 \and  Alireza Bagheri \thanks{Email:ar\_bagheri@aut.ac.ir}\\
 Department of Computer Engineering \& IT,\\ Amirkabir University of Technology, Tehran, Iran.
 }
\title{An approximation algorithm for the longest cycle problem in solid grid graphs}

%\address{\addressmark{1}Department of Computer Engineering \& IT,\\ Amirkabir University of Technology, Tehran, Iran.\\ }

\maketitle
\abstract{
Although, the Hamiltonicity  of solid grid graphs are polynomial-time decidable,
the complexity of the longest cycle problem in these graphs is still open.
In this paper, by presenting a linear-time constant-factor approximation algorithm,
we show that the longest cycle problem in solid grid graphs is in APX.
More precisely, our algorithm finds a cycle of length at least $\frac{2n}{3}+1$ in 2-connected $n$-node solid grid graphs.

\textbf{Keywords:} Longest cycle, Hamiltonian cycle, Approximation algorithm, Solid grid graph.
}

%%%%%%%%%%%%%%%%%%%%%%%%%%%%%%%%%%%%%%%%%%%%%%%%%%%%%%%%%%%%%%%%%%%%%%%%%
%%%%%%%%%%%%%%%%%%%%%%%%%%%%%%%%%%%%%%%%%%%%%%%%%%%%%%%%%%%%%%%%%%%%%%%%%
%%%%%%%%%%%%%%%%%%%%%%%%%%%%%%%%%%%%%%%%%%%%%%%%%%%%%%%%%%%%%%%%%%%%%%%%%
\section{Introduction}\label{sec:intro}

The longest cycle and path problems are well-known NP-hard problems in graph theory.
There are various results which show that these problems are hard to approximate in general graphs.
For example, assuming that P$\ne$NP,
 it has been shown that there is no polynomial-time constant-factor
 approximation algorithm for the longest path problem and also
 it is not possible to find a path of length $n-n^{\epsilon}$
 in polynomial-time in Hamiltonian graphs \cite{karger1997approximating}.
The Color coding technique introduced by Alon et al. \cite{alon1995color}
is one of the first approximation algorithms for these problems
which can find paths and cycles of length $\log n$.
Later, Bj{\"o}rklund et al. introduced another technique with better approximation  ratio, i.e. O$(n\log \log n/\log^2 n)$,
for finding long paths \cite{bjorklund2003finding,gabow2008finding}.
To our knowledge, the result of Gabow \cite{gabow2007finding},
which can find a cycle or path of length $\exp( \Omega (\sqrt{\log l / \log \log l})))$
in graphs with the longest cycle of length $l$,
is the best polynomial-time approximation algorithm for finding the longest cycles.
The results also show that these problems are hard to approximate even in
bounded-degree and Hamiltonian graphs \cite{chen2006approximating,feder2005finding}.
These problems are even harder to approximate in the case of directed graphs as showed in \cite{bjorklund2004approximating}.
For more related results on approximation algorithms on general graphs see \cite{bansal2004further,gabow2008finding2,uehara2005efficient}.

There are few classes of graphs in which the longest path or the longest cycle problems are polynomial
\cite{bulterman2002computing,gutin1993finding,ioannidou2009longest,keshavarz2010longestingrid,mertzios2012simple,uehara2007computing}.
In the case of grid graphs, Itai et al. \cite{itai1982hamiltonian} showed that
the Hamiltonian path and cycle problems are NP-complete.
Grid graphs are vertex-induced subgraphs of the infinite integer  grid $G^\infty$.
Later, Umans et al. showed that the Hamiltonicity of solid grid graphs,
 i.e. the grid graphs in which each internal face has length four, is
 decidable in polynomial time \cite{umans1997hamiltonian}.
However, to our knowledge, there is no result on finding or approximating the longest cycle in this class of graphs,
but there is only a $\frac{5}{6}$-approximation algorithm
for finding the longest paths in grid graphs that have square-free cycle covers \cite{zhang2011approximating}.
In this paper, we introduce a linear-time constant-factor approximation algorithm for the longest cycle problem
in solid grid graphs.
Our algorithm first finds a vertex-disjoint cycle set containing at least $\frac{2n}{3}+1$ of the vertices of a given 2-connected, $n$-vertex solid grid graph
and then merge them into a single cycle.

We organized the paper as follows. In section \ref{sec:pre}, we present the terminology and some
preliminary concepts. Our algorithm for finding the cycle set of the desired length is given in section \ref{sec:cycleset},
and in section  \ref{sec:merge} we show that these cycles can be merged into a single cycle of the same size.
Finally, in section \ref{sec:concl} we conclude the paper.

%%%%%%%%%%%%%%%%%%%%%%%%%%%%%%%%%%%%%%%%%%%%%%%%%%%%%%%%%%%%%%%%%%%%%%%%%
%%%%%%%%%%%%%%%%%%%%%%%%%%%%%%%%%%%%%%%%%%%%%%%%%%%%%%%%%%%%%%%%%%%%%%%%%
%%%%%%%%%%%%%%%%%%%%%%%%%%%%%%%%%%%%%%%%%%%%%%%%%%%%%%%%%%%%%%%%%%%%%%%%%
\section{Preliminaries} \label{sec:pre}
In this section, we present the definitions which is used during the paper and
the necessary concepts about solid grid graphs.
Grid graphs are vertex-induced subgraphs of the infinite integer grid whose
vertices are the integer coordinated points of the plane and there is
an edge between any two vertices of unit distance.
Let $G$ be a solid grid graph, i.e. a grid graph that has no inner face of length more than four.
We consider solid grid graphs as plane graphs, considering their natural embedding
on the integer grid.
The vertices of $G$ adjacent to the outer face are called \emph{boundary vertices},
and the set of boundary vertices of $G$ form its \emph{boundary}.
The boundary of connected plane graph $G$ should be a closed walk,
i.e. a cycle in which vertices and edges may be repeated, which is called \emph{boundary walk} (considering that a single vertex is a walk of length zero).
We use $|W|$ to refer to the number of (not necessarily distinct) edges of a closed walk $W$.
Each \emph{cut vertex} of $G$, i.e. a vertex of $G$ that its removal makes $G$ disconnected, is a
 repeated vertex in its boundary walk and vice versa, therefor,
 $G$ is 2-connected, if and only if its boundary is a cycle.
If $G$ is not connected, its boundary should be a set of closed walks, i.e. the set of
 boundary walks of its connected components.
%We assume that the cycles of $G$ are directed clockwise.
Let cycle $C$ be the boundary of $G$. We say a vertex of boundary cycle $C$ is \emph{convex vertex}, \emph{flat vertex} or \emph{concave vertex} respectively if
its degree in $G$ is two, three or four.
%We say a vertex of a boundary cycle $C$ is \emph{convex vertex} (\emph{concave vertex}) if only one (three)
%of  its adjacent inner faces in $G$ are inside the cycle $C$. A vertex of $C$ that is not neither convex nor concave is a \emph{flat vertex}.
The embedding of any cycle of the plane graph $G$ is a simple rectilinear polygon, as a result,
in each cycle of $G$ the number of convex vertices should be four more than the number of concave vertices.
Also, note that, because solid grid graphs are vertex-induced, their boundary cycles
% have some properties, as an example a boundary cycle
can not contain two consecutive concave vertices.
We define two edges of $G$ to be \emph{parallel edges} if they are not incident to a common vertex,
but both of them are adjacent to the same inner face.
When $G'$ is a subgraph of $G$, we use the notation $G\setminus G'$
to denote the graph obtained from $G$ after removing all the vertices of $G'$ and their incident edges.
It is easy to show that $G\setminus G'$ is also a solid grid graph,
 when $G'$ is the boundary of $G$, or it is a maximal 2-connected subgraph of $G$.
%%%%%%%%%%%%%%%%%%%%%%%%%%%%%%%%%%%%%%%%%%%%%%%%%%%%%%%%%%%%%%%%%%%%%%%%%
%%%%%%%%%%%%%%%%%%%%%%%%%%%%%%%%%%%%%%%%%%%%%%%%%%%%%%%%%%%%%%%%%%%%%%%%%
%%%%%%%%%%%%%%%%%%%%%%%%%%%%%%%%%%%%%%%%%%%%%%%%%%%%%%%%%%%%%%%%%%%%%%%%%
\section{Finding the Cycle Set}
\label{sec:cycleset}
Let $G$ be a 2-connected, $n$-node  solid grid graph and $C$ be its boundary cycle.
Given such a graph $G$, we present an algorithm that finds a set of vertex-disjoint cycles $S$ in $G$
containing at least $\frac{2n}{3}+1$ of the vertices of $G$.
In the next section, by merging these cycles, we construct a cycle of the desired length.

Let $S$ be initially empty.
We add $C$ to $S$, and since the $G\setminus C$ may be not 2-connected,
we repeat the process recursively on its 2-connected disjoint subgraphs.
Let $\{G'_1,...,G'_m\}$ be a maximal set of disjoint maximal 2-connected subgraphs of $G\setminus C$.
%We define $\{G'_1,...,G'_m\}$ to be a \emph{maximal set of  disjoint  2-connected subgraphs} of $G\setminus C$ if
%it is a maximal set of subgraphs of $G\setminus C$ such that
%$G'_1$ is a maximal 2-connected subgraph of $G\setminus C$ and
%$G'_i$ , $2\le i\le m$, is a maximal 2-connected subgraph of $G\setminus C \setminus G'_1 \setminus .. \setminus G'_{i-1}$.
The pseudocode of the procedure for constructing cycle set $S$ is given in algorithm \ref{alg:1}.

%=======================================
\newcommand{\LET}{\STATE \textbf{let }}
\newcommand{\SET}{\STATE \textbf{set }}
\newcommand{\PROC}{\STATE \textbf{procedure }}
\newcommand{\FOREACH}{\STATE \textbf{foreach }}
\newcommand{\INDENT}{\hspace{.5cm}}

\begin{algorithm}
\caption{The algorithm of finding the cycle set $S$}
\label{alg:1}
\begin{algorithmic}
\PROC FindCycleSet($G$)
\end{algorithmic}
\begin{algorithmic}[1]
\STATE \textbf{if } $|G|<4$
    \STATE \INDENT \textbf{return }$\varnothing$
\LET $C$ be the boundary cycle of $G$
\LET $\{G'_1,...,G'_m\}$ be the set of subgraphs of $G$ as defined above%a maximal set of disjoint maximal 2-connected  subgraphs of $G\setminus C$
\STATE $S\leftarrow \{C\}$
\FOREACH {$G'_i,\ 1 \le i \le m$} \textbf{ do }
    \STATE \INDENT \textbf{if not} $G'_i$ is a connected component of $G\setminus C$ having $|G'_i|=4$
    \STATE \INDENT\INDENT $S\leftarrow S\ \cup $ FindCycleSet($G'_i$)
\RETURN $S$
\end{algorithmic}
\end{algorithm}
%=======================================

The line 7 of the algorithm, excludes some  length four cycles from $S$, because
these cycles may be unmergable in the next step of our algorithm.
The following lemma shows that the sum of the lengths of the cycles in $S$ is at least $\frac{2n}{3}+1$.
\begin{lemma}
\label{lem:sumcycle}
The sum of the length of the cycles in $S$ is at least $\frac{2n}{3}+1$.
\end{lemma}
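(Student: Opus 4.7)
I would proceed by strong induction on $n$. The base case $n=4$ is immediate: the only $2$-connected solid grid graph on four vertices is a single $4$-cycle $C$, and the algorithm returns $S=\{C\}$ with total length $4 \geq \frac{8}{3}+1$.

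For the inductive step with $n \geq 5$, let $b=|C|$ be the boundary length and let $G'_1,\dots,G'_{m^*}$ denote the $2$-connected subgraphs on which the algorithm actually recurses---that is, the selected $G'_i$ after filtering out the isolated $4$-cycle components of $G\setminus C$ excluded by line~7. Each such $G'_i$ is itself a $2$-connected solid grid graph of size at least $4$ (its solidity is inherited from $G\setminus C$, which is solid by the closure property stated in Section~\ref{sec:pre}), so the inductive hypothesis yields total cycle length at least $\frac{2|G'_i|}{3}+1$ from the corresponding recursive call. Setting $V^* = \sum_i |G'_i|$ and $L = n-b-V^*$ for the count of ``lost'' interior vertices (those lying in tree components of $G\setminus C$, in excluded $4$-cycle components, or at cut vertices of $G\setminus C$ outside the chosen vertex-disjoint selection), summing the inductive bounds gives total length at least
$$ b + \frac{2V^*}{3} + m^*. $$
A short algebraic rearrangement (using $n = b + V^* + L$) shows that this lower bound is $\geq \frac{2n}{3}+1$ if and only if the key inequality
$$ b + 3m^* \;\geq\; 2L + 3 $$
holds.

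The heart of the proof, and its main obstacle, is to establish this key inequality by a structural/charging argument on $G$. The crucial observation is that every interior vertex of a solid grid graph has degree exactly $4$ in $G$, so a tree vertex of $G\setminus C$ with tree-degree $d$ has precisely $4-d$ neighbors on $C$. Summing over a tree component of size $t$ gives $2t+2$ edges from the tree to $C$; since each boundary vertex is incident to at most two interior neighbors, this ``allocates'' roughly $t+1$ boundary vertices to the $t$ lost tree vertices. Analogous local charges handle the excluded $4$-cycle components (each isolated $2\times 2$ interior block is surrounded by enough boundary of $G$ to pay for its $4$ lost vertices) and the shared cut vertices (each absorbed by the $+1$ contribution from the block it failed to join in the disjoint selection). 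The delicate point, and the part I expect to require the most care, is verifying that these separate local charges never double-count a boundary vertex or a block contribution, so that they combine consistently to give $b + 3m^* \geq 2L + 3$; once this is done, the induction closes.
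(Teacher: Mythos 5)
Your reduction of the lemma to the key inequality $b + 3m^* \ge 2L + 3$ is algebraically sound, and the induction-on-$n$ framing is essentially equivalent to how the paper unrolls its recursion; the problem is that the key inequality is the entire content of the lemma, and your proposal does not prove it. You explicitly defer the verification that the local charges combine without double-counting, and, more seriously, the one charge you do work out is quantitatively too weak. A tree component of $G\setminus C$ with $t$ vertices sends $2t+2$ edges to $C$, and since a boundary vertex may absorb two of them this certifies only $t+1$ distinct boundary vertices, whereas the inequality demands roughly $2t$ boundary vertices per $t$ lost vertices. Concretely, for the $3\times 7$ rectangle we have $n=21$, $b=16$, the interior is a path of $t=5$ vertices, so $L=5$ and $m^*=0$: you must show $b\ge 2L+3=13$, but the charging argument as described yields only $b\ge 6$. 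So even before the double-counting issue, the scheme cannot close the induction; the true reason $b$ is large enough is geometric (the boundary must wrap around the lost component), not a consequence of the degree count alone.

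The paper obtains the per-level bound (in fact the slightly stronger $b \ge 2L+4$) by exactly such a geometric route: it constructs, via split operations on $C$, a cycle $C_i$ immediately enclosing each connected component $G_i$ of $G\setminus C$, with $|C|+4(k-1)\ge\sum_i|C_i|$ (Equation \ref{equ:1}); it proves $|C_i|\ge|W_i|+8$, where $W_i$ is the boundary walk of $G_i$, by matching edges of $W_i$ to parallel edges or concave vertices of $C_i$ (Lemma \ref{lem:innercyclesize}); and it bounds the free vertices of $G_i$ by $\frac{|W_i|}{2}+1$ using the block structure of $G_i$ to show that runs of consecutive free vertices on $W_i$ are short and well separated (Lemma \ref{lem:sumcycle2}). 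If you want to salvage your approach, you would need to replace the tree-to-boundary edge count by a bound of this kind, charging lost vertices against the length of the \emph{enclosing} cycle rather than against incident edges --- at which point you are essentially reconstructing Lemmas \ref{lem:innercyclesize} and \ref{lem:sumcycle2}.
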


To prove Lemma \ref{lem:sumcycle}, we need two other auxiliary lemmas,
so we defer it after the statements of Lemmas \ref{lem:innercyclesize} and \ref{lem:sumcycle2}.
Let $G_1,...,G_k$ be the connected components of $G\setminus C$ and $W_1,...,W_k$ be respectively their boundary walks.
Also, for each connected component $G_i$, let $C_i$ be the cycle of $G$ that immediately encloses $G_i$.
To be more precise, we can construct such cycles $C_1$, $C_2$, ... and $C_k$
by preforming a set of split operations on $C$ as follows.
For each pair of connected components $G_i$ and $G_j$,
their should be a pair of vertices $u$ and $v$ in $G$ whose removal disconnects
the vertices of $G_i$ from the vertices of $G_j$ in $G$.
Based on the fact that $u$ and $v$ are adjacent or not,
we use the split operation shown respectively in figure \ref{fig:split}(a) or (b)
to split the cycle $C$ into two cycles $C'$ and $C''$.
Note that, we need at most four new edges to construct $C'$ and $C''$ from $C$.
We repeat the split operation  recursively on $C'$ and $C''$ until
 we obtain $k$ cycles $C_1,..,C_k$ such that each cycle $C_i$, $1 \le i \le k$,
 encloses only a connected component $G_i$ of $G\setminus C$.
To construct the desired cycle set, only $k-1$ split operations are required,
and in each split operation we use at most four new edges. So, we should have the following equation:
\begin{equation}\label{equ:1}
 |C|+ 4(k-1)\ge \sum_{i=1}^{k}{|C_i|}
\end{equation}

Constructing these cycles is not necessary for finding the final long cycle, however, we need these cycles
in the proof of our lemmas.
In Lemma \ref{lem:innercyclesize}, we show that the length of the boundary walk of $G_i$ is less than the length of its enclosing cycle $C_i$.

\begin{figure}
\begin{center}
\vspace{1cm}
\includegraphics[scale=.6]{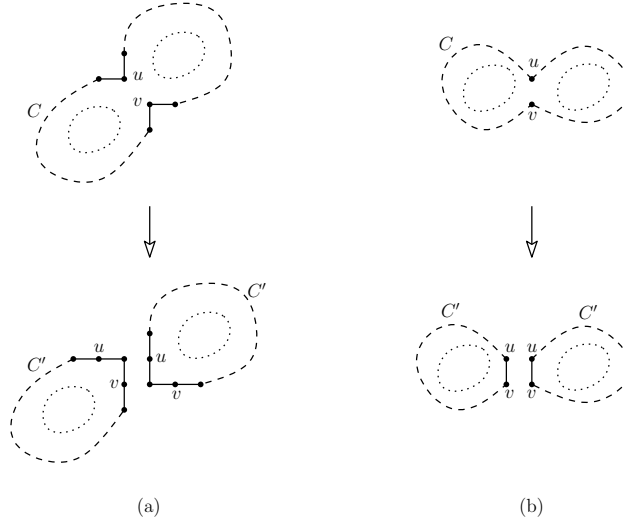}
\caption{The two split operations on $C$. The dashed lines schematically represent the cycle $C$ and the dotted lines schematically represent $G\setminus C$. }
\label{fig:split}
\end{center}
\end{figure}

\begin{lemma}
\label{lem:innercyclesize}
Let $G_i$ be a connected component of $G\setminus C$ and $W_i$ and $C_i$ be respectively its boundary walk and enclosing cycle.
Then we have $|C_i| \ge |W_i|+8$.
\end{lemma}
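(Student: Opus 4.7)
The plan is to reduce the lemma to a geometric identity about the simple rectilinear polygon $C_i$ and its ``inner offset,'' and then verify that identity by Euler's formula. Let $R_i$ denote the closed region bounded by $C_i$, and let $a$, $f$, $b$ be the numbers of convex, flat, and concave vertices of $C_i$; the standard turning-angle identity for rectilinear polygons gives $a - b = 4$. Because $G$ is solid, every unit cell inside $R_i$ is a face of $G$ and every lattice point of $R_i$ is a vertex of $G$, and because $C_i$ is chosen so that $G_i$ is the only component of $G\setminus C$ it encloses, the vertex set of $G_i$ is exactly the set of lattice points strictly interior to $R_i$.

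With $V(G_i)$ thus identified, the geometric claim I want is $|C_i| - |W_i| = 8$; since the lemma only demands $\ge 8$, this is more than enough. To prove it I apply Euler's formula to $G_i$ as a plane graph. Every bounded face of $G_i$ is a unit square (again by solidity), so writing $V_i$, $E_i$ for its vertex and edge counts and $A_i'$ for its number of bounded faces, I have $V_i - E_i + A_i' = 1$ and $|W_i| + 4 A_i' = 2 E_i$ (summing face degrees). Eliminating $A_i'$ yields $|W_i| = 4 V_i - 2 E_i - 4$.

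Next I would express $V_i$ and $E_i$ in terms of data on $C_i$. Writing $A_i$ for the area of $R_i$ in unit cells, Pick's theorem gives $V_i = A_i - |C_i|/2 + 1$; and a straightforward polyomino edge count shows the full induced subgraph of $G$ on $R_i$ has $2 A_i + |C_i|/2$ edges. Classifying boundary vertices by their degree in that subgraph --- convex, flat, and concave vertices of $C_i$ have degree $2$, $3$, $4$ respectively --- shows the number of edges crossing from $C_i$ to the interior is $f + 2b$, so $E_i = 2 A_i - |C_i|/2 - f - 2b$. Substituting into $|W_i| = 4V_i - 2E_i - 4$ collapses the $A_i$ and $|C_i|/2$ terms and leaves $|W_i| = -|C_i| + 2f + 4b$; hence $|C_i| - |W_i| = 2(a + f + b) - 2f - 4b = 2(a - b) = 8$.

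The step I expect to be the main hurdle is the very first one: verifying that $V(G_i)$ coincides with the interior lattice points of $R_i$. This requires knowing that no vertex of the outer boundary $C$ sits in the interior of $R_i$ (so those interior points really belong to $G\setminus C$) and that no other component of $G\setminus C$ is enclosed by $C_i$ (so they belong to the single component $G_i$). Both follow from the recursive split construction described just before the lemma, but they should be made explicit before the Euler-formula bookkeeping takes over; once that identification is granted, the remaining calculation is mechanical.
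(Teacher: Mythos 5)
Your overall strategy (Euler's formula plus Pick's theorem applied to the region $R_i$ bounded by $C_i$) is genuinely different from the paper's proof, which directly matches each edge of $W_i$ either to a distinct parallel edge of $C_i$ or, in groups of at most two, to a distinct concave vertex of $C_i$, and then invokes the convex-minus-concave count of $4$. But your version has a concrete error in the edge count: the claim that the number of edges crossing from $C_i$ to the interior is $f+2b$ silently assumes that every non-polygon grid edge incident to a flat or concave vertex of $C_i$ leads to an \emph{interior} lattice point. It can instead lead to another vertex of $C_i$: a unit edge between two boundary vertices that lies inside $R_i$ but is not an edge of the polygon (a ``chord''). Such chords arise at width-one protrusions and staircase pinches. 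Concretely, take $R_i$ to be a $3\times3$ block of unit cells with one extra cell attached to the middle of its bottom side. Then $|C_i|=14$ with $f=6$ flat and $b=2$ concave vertices, the interior lattice points form a $2\times2$ block, so $|W_i|=4$ and $|C_i|-|W_i|=10$; the two concave vertices are joined by a chord, and only $8$ edges cross to the interior rather than the $f+2b=10$ your formula predicts. Your chain of equalities would output $|W_i|=6$ here, which is wrong, and your asserted identity $|C_i|-|W_i|=8$ fails.

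The good news is that the error goes in the harmless direction: if $c$ denotes the number of chords, the correct counts are $m=f+2b-2c$ crossing edges and $E_i = 2A_i - |C_i|/2 - f - 2b + c$, which propagates to $|C_i|-|W_i| = 8+2c \ge 8$, exactly what the lemma needs. So the approach is repairable by introducing $c$ and carrying it through the computation; you must also still discharge the two identifications you flagged yourself, namely that $V(G_i)$ is precisely the set of interior lattice points of $R_i$ and that $G_i$ is connected with all bounded faces unit squares, so that Euler's formula and the face-degree sum apply. As written, though, the proof asserts a false identity and relies on an incorrect intermediate equality, so it does not yet establish the lemma.
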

\begin{proof}
If $G_i$ be a single vertex then $|C_i|$ is at least eight and the lemma holds.
Otherwise, let $C_i$ and $W_i$ be directed in clockwise order.
For some edges of $W_i$ there is a distinct parallel edge in $C_i$ (as an example see the edges $e_1$ and $e_1'$ in  Figure \ref{fig:CiWi}).
Moreover, for each group of at most two consecutive edges of $W_i$ which have no parallel edges in $C_i$,
there is a distinct concave vertex in $C_i$
(for example the edges $e_2$ and $e_3$ and the vertex $u$ in Figure \ref{fig:CiWi}).
Instead, for each convex vertex $v$ of
$C_i$, the two edges of $C_i$ incident to $v$ have no parallel edge in $W_i$.
Therefore, knowing that the number of convex vertices in $C_i$ is equal  to the number of concave vertices plus four, we have $|W_i|\le|C_i|-8$.
\end{proof}

%\begin{figure}
%\begin{center}
%\includegraphics[scale=.7]{left-turns.pdf}
%\caption{Possible configurations of $W_i$ and $C_i$ around the parallel edges $uv$ and $u'v'$ of $G$}
%\label{fig:left-turns}
%\end{center}
%\end{figure}

\begin{figure}
\begin{center}
\includegraphics[scale=.8]{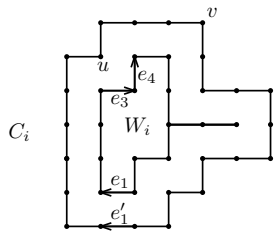}
\caption{An example cycle $C_i$ and walk $W_i$}
\label{fig:CiWi}
\end{center}
\end{figure}

Each 2-connected subgraph $G'_i$, $1 \le i \le m$, in Algorithm \ref{alg:1} is a subgraph of a connected component of $G\setminus C$.
Thus, without loss of generality, let $G'_{l_i},..,G'_{h_i}$ be subgraphs of $G_i$ for some $1\le l_i \le h_i \le m$.
Also, we call a vertex of $G$ \emph{free vertex} if it is not in any of cycles of $S$.
\begin{lemma}
\label{lem:sumcycle2}
For each connected component $G_i$ of $G\setminus C$, if $|G_i|>4$ we have $|G_i|- \sum_{j=l_i}^{h_i}|G'_j| \leq \frac{|W_i|}{2}+1$.
\end{lemma}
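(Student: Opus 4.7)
The plan is to identify the vertices of $G_i$ not covered by any $G'_j$---call these \emph{free} vertices and denote their set by $F$---and to bound $|F|$ by exploiting the bridge structure of $G_i$ together with the walk length $|W_i|$.

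My first step is to show that every free vertex lies on the outer face of $G_i$, and so appears in $W_i$. Suppose $v\in V(G_i)$ were interior, meaning all four grid-neighbours of $v$ and all four surrounding unit squares belong to $G_i$; then the $3\times 3$ neighbourhood of $v$ is a $2$-connected solid subgraph, and $v$ itself is not a cut vertex of $G_i$. Consequently $v$ lies in a unique maximal $2$-connected subgraph of $G_i$, which by the maximality of the disjoint family $\{G'_j\}$ must appear as some $G'_j$, contradicting $v\in F$. The same reasoning shows that every edge of $G_i$ incident to a free vertex must be a bridge of $G_i$, since any non-bridge edge lies on a cycle and therefore inside a maximal $2$-connected block containing $v$.

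The second step is the counting. Each bridge is traversed twice by the boundary walk, so $|W_i|\ge 2b$, where $b$ is the number of bridges of $G_i$. The bridges span a forest $T$ whose vertex set consists of $F$ together with the ``attachment points'' where a bridge meets some $G'_j$. Using the forest identity (vertices $=$ edges $+$ components) and the fact that---whenever $G_i$ contains at least one $G'_j$---each component of $T$ must contain at least one attachment point in order for $G_i$ to remain connected, I obtain $|F|\le b\le |W_i|/2$. In the remaining case in which $G_i$ admits no $G'_j$ at all, $G_i$ is necessarily a tree, every vertex is free, $|W_i|=2(|G_i|-1)$, and the bound $|F|=|G_i|=|W_i|/2+1$ is attained tightly; this tree case is exactly what forces the additive $+1$ into the statement. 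The hypothesis $|G_i|>4$ is used only to rule out the isolated $4$-cycle component excluded by line~7 of the algorithm, which would otherwise violate the inequality.

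The main obstacle is the first step. Ruling out interior free vertices requires the solid-grid property (so that the $3\times 3$ neighbourhood around an interior vertex is guaranteed to be $2$-connected), the observation that an interior vertex of $G_i$ can never be a cut vertex of $G_i$, and a careful use of the maximality of the disjoint family $\{G'_j\}$; if any one of these ingredients is weakened, an interior vertex could in principle be stranded outside every $G'_j$, after which the bridge-forest counting in the second step would no longer apply.
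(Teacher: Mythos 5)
Your first step (free vertices lie on $W_i$) matches the paper, and your tree-case accounting for the additive $+1$ is sound, but the core counting rests on a claim that is false: that every edge of $G_i$ incident to a free vertex is a bridge. Your justification shows only that a free vertex incident to a non-bridge edge would lie in \emph{some} maximal $2$-connected subgraph of $G_i$; it does not follow that this subgraph is one of the selected $G'_j$. The family $\{G'_j\}$ is merely a maximal collection of pairwise \emph{disjoint} maximal $2$-connected subgraphs, so a block that meets a selected $G'_j$ in a single cut vertex is left out of the family entirely, and all of its vertices except that cut vertex are free even though every edge they touch lies on a cycle. Concretely, take $G_i$ to be a $3\times 3$ block of nine vertices together with a unit square attached diagonally at one corner: the family can contain only the $3\times 3$ block, the three remaining vertices of the small square are free, the bridge forest is empty ($b=0$), and your bound $|F|\le b$ gives $0$ against the true value $3$. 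This is not a pathological case the paper ignores --- it is exactly the situation the paper's proof is built around: Figure \ref{fig:free} depicts three consecutive free vertices of $Y_i$ lying on non-bridge boundary edges.

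The paper's argument therefore cannot be reduced to bridge counting. It splits $W_i$ into the $x$ duplicated (bridge) edges, whose removal strands at most $x+1$ free vertices (this is the part your forest/tree analysis correctly captures), and the residual closed walks $Y_i$ of singly-traversed edges, where free vertices \emph{on cycles} still occur. There the bound comes from a local density argument: each such free vertex is adjacent to an inner face not contained in any $G'_j$, which by maximality of the family must hang off a cut vertex adjacent to a face with no free vertices; this forces runs of at most three consecutive free vertices separated by at least three non-free ones, giving at most $\frac{|Y_i|}{2}$ free vertices in $Y_i$. To repair your proof you would need to add an argument of this kind for free vertices lying in excluded blocks; as written, the bridge-forest step simply does not see them.
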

\begin{proof}
First, note that, because $G'_j$, $1\le j\le m$, is maximal and 2-connected,
 each vertex of $G_i$ which is not in any $G'_j$, $1\le j\le m$,
 should be on the boundary of $G_i$, i.e. $W_i$, and they should be free vertices.
Therefore, to prove the lemma, we show that at most $\frac{|W_i|}{2}+1$ of the vertices of $W_i$ are free vertices.
Let $W_i$ contains $x$ duplicated edges (i.e. the edges that are repeated in $W_i$ two times).
Removing all the $x$ duplicated edges from $W_i$, including the resulting  isolated vertices, will result a set of closed walks $Y_i$.
The length of $|Y_i|$, i.e. the sum of the lengths of its closed walks, should be $|W_i|-2x$.
Also, the vertices that are in $W_i$ but not in $Y_i$ should be free vertices, because they are adjacent only to the outer face.
There should be at most $x+1$ such distinct vertices.
In addition, we will show that there is at most $\frac{|Y_i|}{2}$  free vertices in $Y_i$.
Thus, the total number of free vertices of $W_i$ is not more than $\frac{|Y_i|}{2}+x+1$ which is equal to $\frac{|W_i|}{2}+1$.

Each free vertex of $Y_i$  should be adjacent to an inner face $f$ of $G_i$ which is not in any $G'_j$, $l_i \le j \le h_i$
(see Figure \ref{fig:free} as an example).
Because of the maximality of $G'_j$, $1\le j\le m$, $f$ can not share any edge with another inner face of $G_i$,
so it should be adjacent to a cut vertex $v_c$ of $G_i$.
Also, $v_c$ should be adjacent to another face $f'$ of $G_i$,
and because of maximality of $G'_j$, $1\le j\le m$, $f'$ can not contain free vertices.
Hence, $v_c$ is not a free vertex, and this ensures that $Y_i$ can not contain more than three consecutive free vertices.
Moreover, the fact that $f'$ can not contain free vertices shows that between any two group of consecutive free vertices in $Y_i$
 there is at least three consecutive non-free vertices.
Therefore, the number of free vertices in $Y_i$ is not more than $\frac{|Y_i|}{2}$. This completes our proof.
\end{proof}

\begin{figure}
\begin{center}
\includegraphics[scale=.7]{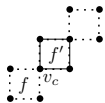}
\caption{An example in which three consecutive vertices of $Y_i$ are free vertices. Dotted edges are edges of $Y_i$ and dark edges are edges of the cycles in $S$}
\label{fig:free}
\end{center}
\end{figure}

\begin{proof}[Proof of lemma \ref{lem:sumcycle}]
To prove the lemma, we first show that in each iteration of construction of $S$ in algorithm \ref{alg:1},
the number of vertices left unused, i.e. the free vertices, is not more than $\frac{|C|-4}{2}$.
Consider a connected component $G_i$ of $G\setminus C$.
If $|G_i|\leq4$, all the vertices of $G_i$ is free vertices, and one can easily
check that $|G_i|< \frac{|C_i|-4}{2}$ holds.
Otherwise, using lemma \ref{lem:sumcycle2} the number of  vertices left free on $G_i$
is less than $\frac{|W_i|}{2}$ which is less than $\frac{|C_i|-8}{2}$ by lemma \ref{lem:innercyclesize}.
Next, summing the maximum number of vertices left free in each connected component
 of $G\setminus C$, the total number of vertices left free in a single step of
 the algorithm should be at most $\sum^{i=1}_{k}{\frac{|C_i|-4}{2}}$,
 which is not more than $\frac{|C|-4}{2}$ by Equation \ref{equ:1}.
Considering all the steps of the algorithm,
the total number of vertices left free can not be more than $\frac{|S|-4}{2}$.
Therefore, we have $n\leq |S|+ \frac{|S|-4}{2}$ which shows that $|S|\geq \frac{2n}{3}+1$.
\end{proof}

\section{Merging the Cycles}
\label{sec:merge}

After finding the cycle set $S$, the next step of the algorithm is to merge all the cycles of $S$ into a single cycle.
Except the boundary cycle of $G$, each cycle $C_{in}\in S$ is nested immediately inside a cycle $C_{out}\in S$ which is called its \emph{outer cycle}.
Also, $C_{in}$ is called an inner cycle of $C_{out}$.
Our algorithm starts from the outermost cycle of $S$, and merge
each cycle with its inner cycles using one of the merge operations which are shown in Figure \ref{fig:merge}.
But, $S$ may contains some cycles that are not mergeable with their outer cycles using our merge operations.
These cycles are diamond-shaped cycles. More precisely, a \emph{diamond-shaped cycle}
is a boundary cycle if it contains no flat vertex, and
a solid grid graph which its boundary is a diamond-shaped cycle is called a \emph{diamond-shaped grid graph}.

\begin{figure}
\begin{center}
\includegraphics[scale=.6]{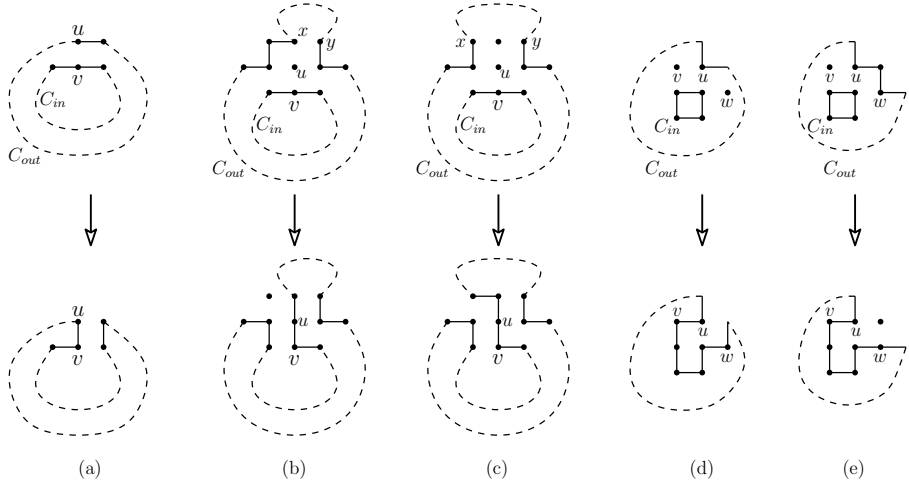}
\caption{Cycle merging operations}
\label{fig:merge}
\end{center}
\end{figure}

\begin{lemma} \label{lem:flatvertex}
Let $C_{in}$ be a cycle in $S$ and $C_{out}$ be its outer cycle.
For each flat vertex $v$ in $C_{in}$, there is at least one distinct flat vertex in $C_{out}$.
\end{lemma}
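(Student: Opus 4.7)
My plan is to build an injection $\phi$ from flat vertices of $C_{in}$ to flat vertices of $C_{out}$ by an outward grid-ray projection, and then to repair the injection at concave vertices of $C_{out}$ where collisions can occur.

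Let $G$ denote the 2-connected solid grid graph whose boundary is $C_{out}$, and let $G'$ be the 2-connected subgraph of $G\setminus C_{out}$ whose boundary is $C_{in}$. For a flat vertex $v$ of $C_{in}$, the unique grid neighbor $u$ of $v$ lying outside $G'$ (in the outward direction perpendicular to $C_{in}$ at $v$) is in $G$, because $v$ is an interior vertex of the solid graph $G$ and therefore has all four grid neighbors in $G$. I would then follow the grid ray from $v$ through $u$: each intermediate vertex on the ray is again an interior vertex of $G$ with all four grid neighbors in $G$, so the ray stays in $G$ until it first reaches a vertex $\phi_0(v)$ on $C_{out}$. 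A parallel-ray argument applied to the two $C_{in}$-neighbors $v_L, v_R$ of $v$ (both interior vertices of $G$) places the grid neighbors of $\phi_0(v)$ perpendicular to the ray in $G$ as well, so $\phi_0(v)$ has degree at least $3$ in $G$; it is therefore flat or concave on $C_{out}$, but never convex.

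When $\phi_0(v)$ is flat I set $\phi(v)=\phi_0(v)$, and this partial map is injective because distinct flat vertices of $C_{in}$ give distinct outward rays. The hard part will be the concave case, where two flat vertices of $C_{in}$ coming from perpendicular interior directions can share the same concave image on $C_{out}$; for instance, both $(3,2)$ and $(2,3)$ may project to the same inner corner $(3,3)$ when $G$ is L-shaped. I would repair each such collision by redirecting each colliding flat vertex of $C_{in}$ to the cycle-neighbor of the concave $c$ on $C_{out}$ along the corresponding ray direction. Those cycle-neighbors must themselves be flat: the solid structure of $G$ around a concave $c$ forces them to have degree at least $3$ in $G$, while the paper's earlier observation that a solid grid graph's boundary contains no two consecutive concave vertices rules out the concave alternative. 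If a redirected target is already taken by another flat vertex of $C_{in}$, I would cascade the redirection further along $C_{out}$; termination and final injectivity follow from the length surplus $|C_{out}|\ge |C_{in}|+8$ supplied by Lemma~\ref{lem:innercyclesize}, which leaves enough unused flat vertices on $C_{out}$ to absorb all cascades.
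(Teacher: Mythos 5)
Your ray-projection strategy is genuinely different from the paper's proof, which never leaves the immediate neighbourhood of $v$: the paper uses the maximality of $G_{in}$ as a 2-connected block of $G_{out}\setminus C_{out}$ to conclude that the vertex $u$ outside $C_{in}$ adjacent to $v$ is either on $C_{out}$ or a free vertex wedged between $C_{in}$ and $C_{out}$, reducing everything to the three local configurations of Figure \ref{fig:merge}(a)--(c), each of which exhibits a flat vertex within distance two of $v$ (and doubles as the merge gadget used later in Lemma \ref{lem:mergeall3}). Your projection $\phi_0$ is fine up to the point where you show $\phi_0(v)$ is never convex (although the ``parallel-ray'' justification is loose: the rays from $v_L,v_R$ need not reach the row of $\phi_0(v)$; the clean argument is that the last vertex of the ray before $\phi_0(v)$ is interior to $G$, so its four incident faces are unit squares, which forces the two perpendicular neighbours of $\phi_0(v)$ into $G$).

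The repair step, however, has a genuine gap. Your claim that the cycle-neighbours of a concave image $c$ ``must themselves be flat'' because the solid structure forces degree at least $3$ is false. Take $G$ to be all integer points of $([0,3]\times[0,4])\cup([0,7]\times[0,3])$, so $C_{out}$ is its boundary with concave vertex $(3,3)$, and $C_{in}$ is the boundary of the inner L-shape $G\setminus C_{out}$. The flat vertex $(3,2)$ of $C_{in}$ projects outward to $c=(3,3)$, and the cycle-neighbour of $c$ in the ray direction is $(3,4)$, which has only the neighbours $(3,3)$ and $(2,4)$ and is therefore convex. The ``no two consecutive concave vertices'' fact only excludes the concave alternative; nothing in your argument excludes the convex one. (Here the other cycle-neighbour $(4,3)$ is flat, and the paper finds it via the parallel-edge observation: the endpoint of the boundary edge of $c$ parallel to an edge of $C_{in}$ at $v$ cannot be convex, because one of its two would-be neighbours lies in $C_{in}$ and hence off $C_{out}$ --- but that argument needs $c$ adjacent to $v$, i.e.\ exactly the distance-$\le 2$ structure your ray discards.) Separately, the cascade argument for injectivity is not a proof: $|C_{out}|\ge|C_{in}|+8$ bounds edge counts, not the number or the location of flat vertices of $C_{out}$, and even a global surplus of flat vertices does not show that cascades travelling ``further along $C_{out}$'' (in an unspecified direction) terminate at distinct unused flat vertices; that would require a Hall-type counting argument on each arc of $C_{out}$ between consecutive collision sites, which you do not supply.
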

\begin{proof}
Let $G_{in}$ and $G_{out}$ be respectively the grid graphs that
$C_{in}$  and $C_{out}$  be their boundaries and $u$ be the vertex of $G$ outside of the cycle $C_{in}$ adjacent to $v$.
See the upper parts of Figures \ref{fig:merge} (a), (b) and (c).
If $u$ is not a free vertex, then it should be in $C_{out}$,
and at least one of its two incident edges in $C_{out}$ should be
parallel to one of the edges of $C_{in}$ incident to $v$ (upper part of Figure \ref{fig:merge} (a)).
But if $u$ is a free vertex,
 considering this fact that
 $C_{in}$ and $C_{out}$ are boundary cycles of some solid grid graphs
and $G_{in}$ is a maximal 2-connected subgraph of $G_{out}\setminus C_{out}$,
the configuration of $C_{in}$ and $C_{out}$  around the vertices $u$ and $v$
must be isomorphic to one of the configurations which is depicted in the upper parts of Figures \ref{fig:merge}(b) and (c).
Hence,
 in Figure \ref{fig:merge}(a) either $u$ or one of its two adjacent vertices in $C_{out}$,
 in Figure \ref{fig:merge}(b) one of the vertices $x$ and $y$,
 and in Figure \ref{fig:merge}(c) both of the vertices $x$ and $y$ are flat vertices of $C_{out}$.
Thus,
 for each flat vertex of $C_{in}$ there is a flat vertex in $C_{out}$.
\end{proof}

Let $C_i$ be a diamond-shaped cycle and $G_i$ be a solid grid graph which its boundary is $C_i$.
If $G_i\setminus C_i$ be 2-connected, it should be diamond-shaped.
Otherwise, by Lemma \ref{lem:flatvertex}, $C_i$ should has a flat vertex.
So, the diamond-shaped cycles in $S$ can be grouped into some groups of nested
diamond-shaped cycles (for an example see figure \ref{fig:convertdiamond}).
We have the following lemma about the innermost diamond-shaped cycles.
Note that, the length-four cycle is the smallest diamond-shaped cycle.

\begin{figure}
\begin{center}
\includegraphics[scale=.7]{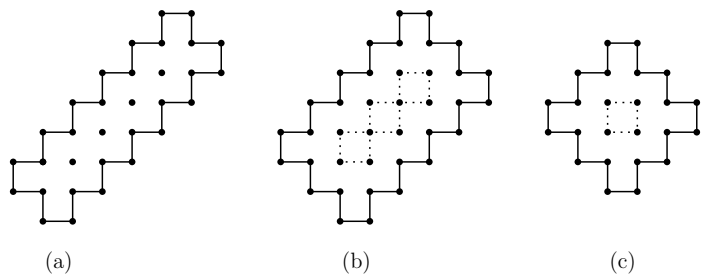}
\caption{Examples of diamond-shaped cycles. Only (a) and (c) can be innermost cycles in $S$}
\label{fig:diamond}
\end{center}
\end{figure}

\begin{lemma} \label{lem:freevertex}
Let $C_d$ be an innermost diamond-shaped cycle in $S$ and $G_d$ be the solid grid graph whose boundary is $C_d$.
Then, either $|C_d|=4$ and the outer cycle of $C_d$ is not diamond-shaped or there is at least one free vertex in the boundary of $G_d\setminus C_d$.
\end{lemma}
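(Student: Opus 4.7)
The plan is to apply Lemma~\ref{lem:flatvertex} contrapositively to eliminate any cycles of $S$ inside $C_d$, and then to split into cases on whether $G_d\setminus C_d$ is empty. After the first step, the lemma will be almost immediate in the non-empty case, and the empty case will reduce to showing $|C_d|=4$ together with a non-diamond property of the outer cycle.

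First I would show that $C_d$ has no inner cycle in $S$ at all. If some $C'\in S$ were nested immediately inside $C_d$, then Lemma~\ref{lem:flatvertex} with $C_{in}=C'$ and $C_{out}=C_d$ would assign a distinct flat vertex of $C_d$ to each flat vertex of $C'$; but $C_d$ is diamond-shaped and has no flat vertex, forcing $C'$ to have none either, which would make $C'$ a diamond-shaped cycle of $S$ strictly inside $C_d$ and contradict that $C_d$ is innermost. Reading this back through Algorithm~\ref{alg:1}, during the recursive call on $G_d$ every maximal 2-connected subgraph of $G_d\setminus C_d$ was skipped by line~7, so each such subgraph is a connected component of $G_d\setminus C_d$ with exactly four vertices; in particular, every vertex of $G_d\setminus C_d$ is a free vertex.

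Next I would split into cases. If $G_d\setminus C_d$ is non-empty, 2-connectivity of $G_d$ ensures that some vertex of $G_d\setminus C_d$ is adjacent in $G_d$ to a vertex of $C_d$; any such vertex lies on the boundary walk of $G_d\setminus C_d$ and is free by the observation above, giving the second clause of the lemma directly. Otherwise $G_d$ has no interior vertex, and via the structure of 2-connected solid grid graphs whose vertices all lie on the boundary I would argue that $G_d$ must be a single unit cell, forcing $|C_d|=4$.

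The remaining and hardest sub-step is then showing that the outer cycle $C_{out}$ of $C_d$ is not diamond-shaped. Since $C_d$ was not skipped by line~7, the 4-cycle $G_d$ is not a connected component of $G_{out}\setminus C_{out}$, so some vertex $v$ of $G_d$ is a cut vertex sharing an edge with another block of the same connected component. I would analyze the grid neighborhood of $v$ inside the 2-connected solid grid $G_{out}$, and use that $G_{out}$'s boundary is a simple cycle wrapping both $G_d$ and $v$'s extra neighbor, to locate a nearby vertex on $C_{out}$ of degree three in $G_{out}$, which is a flat vertex ruling out a diamond shape. The main obstacle I anticipate is making this local neighborhood analysis robust to every possible way $G_{out}$ can extend past $v$.
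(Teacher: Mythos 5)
Your first half is sound and is essentially the paper's own route: you use Lemma~\ref{lem:flatvertex} contrapositively to conclude that no cycle of $S$ lies inside $C_d$, hence every vertex of $G_d\setminus C_d$ is free, and the non-empty case then follows immediately. The trouble is in the empty case. The claim that a diamond-shaped $2$-connected solid grid graph all of whose vertices lie on the boundary must be a single unit cell is false: the plus-shaped pentomino (the five unit cells $[0,1]^2$, $[1,2]\times[0,1]$, $[-1,0]\times[0,1]$, $[0,1]\times[1,2]$, $[0,1]\times[-1,0]$) induces a $2$-connected solid grid graph on $12$ vertices whose boundary cycle has $8$ convex and $4$ concave vertices and no flat vertex, so it is diamond-shaped, yet all $12$ vertices are on the boundary, $G_d\setminus C_d=\varnothing$, and $|C_d|=12$. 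So ``empty interior $\Rightarrow |C_d|=4$'' cannot be argued the way you propose; you would need to show separately that such shapes cannot occur as innermost diamond-shaped cycles of $S$. To be fair, the paper's proof commits the mirror image of the same omission: it asserts, citing only Figure~\ref{fig:diamond}, that $|C_d|>4$ forces $G_d\setminus C_d$ to be a non-empty set of isolated vertices or a single $4$-cycle, which the plus-pentomino also violates. You are reproducing a gap already present in the paper rather than introducing a new one, but as written this sub-step does not go through.

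For the remaining sub-step ($|C_d|=4$ implies $C_{out}$ is not diamond-shaped), the paper has a much shorter argument than the local neighbourhood analysis you anticipate struggling with. Argue by contradiction: if $C_{out}$ were diamond-shaped, then $G_{out}\setminus C_{out}$ would have to be $2$-connected --- otherwise, by the discussion preceding Lemma~\ref{lem:freevertex} (built on Lemma~\ref{lem:flatvertex}), $C_{out}$ would have a flat vertex. A maximal $2$-connected subgraph of a $2$-connected graph is the whole graph, so $G_d=G_{out}\setminus C_{out}$ would be a four-vertex connected component of $G_{out}\setminus C_{out}$, and line~7 of Algorithm~\ref{alg:1} would then have kept $C_d$ out of $S$, a contradiction. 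This replaces your case analysis of how $G_{out}$ can extend past the cut vertex $v$ with a single appeal to machinery the paper has already set up, and avoids the robustness worry you flag at the end.
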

\begin{proof}
First let $|C_d|=4$, and  the outer cycle $C_{out}$ of $C_d$ is diamond shaped, as depicted in \ref{fig:diamond}(c),
and $G_{out}$ be the  solid grid graph which its boundary is $C_{out}$.
In this case, $G_d$ is a connected component of $G_{out}$ which contradicts the
line 7 of Algorithm \ref{alg:1}. Therefor, $C_{out}$ is not diamond-shaped when $|C_d|=4$.
For the case that $|C_d|>4$, because $C_d$ is an innermost cycle in $S$,
either $G_d\setminus C_d$ is a set of isolated vertices, as shown in Figure \ref{fig:diamond}(a),
or it is a length-four cycle not in $S$, as depicted in Figure \ref{fig:diamond}(b).
Clearly, in both cases there should be a free vertex in the boundary of $G_d\setminus C_d$.
\end{proof}

Using the free vertices that their existence proved in Lemma \ref{lem:freevertex},
we replace each group of nested diamond-shaped cycles in $S$,
except the length-four diamond-shaped cycles, by a set of non-diamond-shaped cycles as depicted in Figure \ref{fig:convertdiamond},
and we name the resulting cycle set $S'$.
So, $S'$  does not contain any diamond-shaped cycles, except the diamond-shaped cycles of length four.
Lemma \ref{lem:mergeall3} insures that we can merge all the cycles of $S'$ starting from the innermost cycles.

\begin{figure}
\begin{center}
\includegraphics[scale=.5]{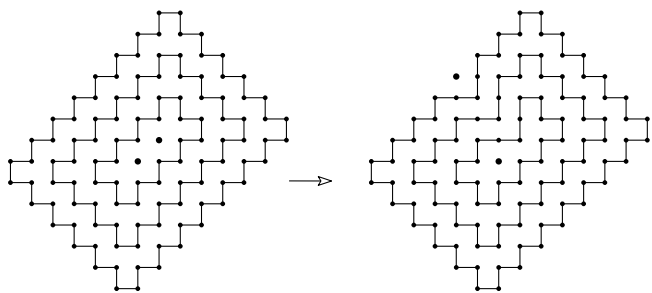}
\caption{Converting a group of nested diamond-shaped cycles into a group of non-diamond-shaped cycles using a free vertex }
\label{fig:convertdiamond}
\end{center}
\end{figure}

\begin{lemma} \label{lem:mergeall3}
Using the merge operations of Figure \ref{fig:merge}, all the cycles of $S'$ can be merged into a single cycle $C$ containing a given boundary edge $e$ of $G$.
\end{lemma}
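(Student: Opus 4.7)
The plan is to prove the lemma by induction on $|S'|$, progressively merging cycles from the innermost outward. The base case is $|S'|=1$, where the single cycle is the outer boundary of $G$ itself, and it contains $e$ by hypothesis. For the inductive step I would select an innermost cycle $C_{in}\in S'$, combine it with its outer cycle $C_{out}$ into a single cycle $C^*$ using one of the merge operations of Figure \ref{fig:merge}, and then apply the induction hypothesis to the reduced set $S''=(S'\setminus\{C_{in},C_{out}\})\cup\{C^*\}$. To make this reduction legal I have to check that $S''$ still satisfies the structural conditions needed by the induction (it is still a nested family of non-diamond cycles together with any length-four diamonds, and $C^*$ inherits the role of $C_{out}$ with respect to all the other cycles it encloses).

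The combinatorial heart of the step is to show that a merge is always available. Here I would split into cases according to whether $C_{in}$ is a length-four diamond. If $|C_{in}|>4$, then by construction of $S'$ the cycle $C_{in}$ is not diamond-shaped, so it has a flat vertex $v$; Lemma \ref{lem:flatvertex} applied at $v$ produces exactly one of the three local configurations depicted in the upper part of Figure \ref{fig:merge}, and in each case the corresponding operation in the lower part of Figure \ref{fig:merge} replaces a pair of parallel edges (or the analogous pair across a free vertex) and joins $C_{in}$ with $C_{out}$ into a single cycle. If $|C_{in}|=4$, then Lemma \ref{lem:freevertex} guarantees that $C_{out}$ itself is not diamond-shaped, so it contains a flat vertex, and the length-four configuration around $C_{in}$ fits operation (c) (the relevant adjacency follows because $C_{in}$ is an innermost diamond and Lemma \ref{lem:freevertex} provides the free vertex needed to realise that configuration).

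To preserve the distinguished edge $e$, I would refine the order in which sibling cycles are merged. The edge $e$ sits on the outermost cycle and is only at risk when a merge involving the current ``outer'' cycle removes an edge of the original boundary coincident with $e$. Since every merge operation only deletes a single edge in $C_{out}$ at the local configuration, and since $C_{out}$ typically presents several candidate flat vertices (or, in the degenerate case, the flat vertex lies at a specific site independent of $e$), I can always schedule the merges so that the deleted edge is not $e$. I expect this bookkeeping to be the most delicate obstacle: one must argue that the local modification introduced by one merge neither forces a later merge onto $e$ nor destroys the flat-vertex configurations required by the remaining siblings of $C_{in}$. This can be handled by observing that the operations of Figure \ref{fig:merge} are strictly local, so they only affect a constant-size neighbourhood of their merge site and leave flat vertices elsewhere on $C_{out}$ untouched.
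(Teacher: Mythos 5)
Your treatment of the non-diamond cycles matches the paper's argument (flat vertex of $C_{in}$, the three configurations from Lemma \ref{lem:flatvertex}, operations (a)--(c)), and your locality observation for preserving $e$ is essentially the paper's point that each inner cycle offers at least two flat vertices as merge sites. However, your handling of the length-four diamond cycles has a genuine gap. A length-four cycle has \emph{no} flat vertex (all four of its vertices are convex), so none of the operations (a)--(c) can be triggered from a flat vertex of $C_{in}$; operation (c) in particular is anchored at a flat vertex of $C_{in}$, so the claim that ``the length-four configuration around $C_{in}$ fits operation (c)'' does not go through. The paper instead splits this case in two: if $C_{in}$ shares a parallel edge with the already-merged cycle, operation (a) applies directly; otherwise one needs the two dedicated operations (d) and (e), chosen according to whether a second nearby vertex $w$ is free or not. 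Your proposal never invokes (d) or (e) at all.

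The second problem is where the required free vertex comes from. You cite Lemma \ref{lem:freevertex}, but for $|C_d|=4$ that lemma only asserts that the outer cycle is not diamond-shaped; it does \emph{not} produce a free vertex adjacent to $C_{in}$ (its free-vertex conclusion is for the case $|C_d|>4$). The paper derives the free vertex from line 7 of Algorithm \ref{alg:1} (a length-four cycle kept in $S$ is not an entire connected component of $G\setminus C$) together with the maximality of the 2-connected subgraphs $G'_i$, which forces a free vertex adjacent to one of the four vertices of $C_{in}$ when no parallel edge to the outer structure exists. Knowing merely that $C_{out}$ has a flat vertex somewhere does not help, because that flat vertex need not be anywhere near $C_{in}$. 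Without this, your induction step fails exactly on the length-four diamonds, which are the cycles the construction of $S'$ deliberately leaves in place.
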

\begin{proof}
If $S'$ contains only one cycle, the lemma holds easily. Otherwise,
let $C_{in}$ be a cycle of $S'$ and $C_{out}$ be its outer cycle,  and let
$G_{in}$ and $G_{out}$ be respectively the grid graphs that $C_{in}$ and $C_{out}$ are their boundary cycles.

For the case that $C_{in}$ is a non-diamond-shaped cycle,
let $v$ be a flat vertex of $C_{in}$.
As described in the proof of Lemma \ref{lem:flatvertex},
there is only three possible configurations for $C_{in}$ and $C_{out}$ around $v$,
as depicted in the upper parts of the Figures \ref{fig:merge} (a), (b) and (c).
In these cases, we can use respectively the merge operations depicted in
Figures \ref{fig:merge} (a), (b) and (c) to merge $C_{in}$ and $C_{out}$.
Moreover,
 $C_{in}$ should contain at least two flat vertices, because any cycle in a grid graph has even length and the
 number of convex vertices in $C_{in}$ is four more than concave vertices.
Therefore,
 starting from the outermost cycle and each time merging the cycle by one of its inner cycles
 one can merge all the non-diamond shaped cycles of $S'$ into a single cycle $C_{res}$.
Note that,
 the outermost cycle of $S'$ contains the edge $e$, and it can be merged by each of
 its inner cycles using at least two different flat vertices.
Therefor,
 we can chose the merge operations such that the cycle $C_{res}$ contains the edge $e$.

Considering Lemma \ref{lem:freevertex},
the only case that $C_{in}\in S'$ is diamond-shaped, is when $|C_{in}|=4$.
Let $C_{in}$ be such a cycle and let it has no parallel edge with $C_{res}$,
otherwise they can be merged by the merge operation of Figure \ref{fig:merge}(a).
In this case, by line 7 of Algorithm \ref{alg:1} and maximality of subgraphs $G'_i$,
there should be a free vertex $v$ adjacent to one of the four vertices of $C_{in}$.
%because otherwise $G_{in}$ is a connected component of $G_{out}\setminus C_{out}$ and $C_{in}$ can not be in $S$ because of line 7 of the algorithm \ref{alg:1}.
Because there is no parallel edges between $C_{in}$ and $C_{res}$,
there are two possible configurations for $C_{in}$ and $C_{res}$ around the vertex $v$.
The two possible configurations are depicted in the upper parts of the Figure \ref{fig:merge} (d) and (e).
%We will show that, as depicted in these figures, vertex $u$ should be in the cycle $C_{out}$, so the two incident edges of $u$ in $C_{out}$ forced to be as depicted in these figures.
Thus, based on the fact that  vertex $w$,
in these figures, is a free vertex or not,
$C_{in}$ and $C_{res}$ can be merged respectively using  the merge
operations of the Figures \ref{fig:merge}(d) and (e).
\end{proof}

We conclude this section summarizing our result in the following theorem.

\begin{theorem} \label{thm:longestcycle}
There is a linear-time $\frac{2}{3}$-approximation algorithm for finding
a longest cycle in solid grid graphs.
\end{theorem}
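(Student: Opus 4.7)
The plan is to chain together the structural results already established into an end-to-end construction. First, I would run Algorithm \ref{alg:1} on the input graph $G$ to obtain the vertex-disjoint cycle set $S$; by Lemma \ref{lem:sumcycle}, the cycles in $S$ cover at least $\frac{2n}{3}+1$ of the vertices of $G$. Next, I would apply the rewriting sketched in Figure \ref{fig:convertdiamond} to replace every maximal group of nested diamond-shaped cycles in $S$ (other than isolated length-four diamonds) by an equivalent collection of non-diamond-shaped cycles. Lemma \ref{lem:freevertex} guarantees the free vertex needed to perform this rewriting, and since the rewriting only splices in additional vertices, the resulting set $S'$ still covers at least $\frac{2n}{3}+1$ vertices of $G$ while satisfying the hypotheses of Lemma \ref{lem:mergeall3}.

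Having produced $S'$, I would invoke Lemma \ref{lem:mergeall3} to merge all its cycles into a single cycle $C^{\star}$ by iteratively applying the operations in Figure \ref{fig:merge}, starting from the outermost cycle and working inward. To pin down the approximation ratio, I would then verify that none of the merge operations (a)--(e) deletes a vertex: each of them either exchanges a pair of parallel edges for a pair of perpendicular edges on exactly the same vertex set, or it splices one or two free vertices into the combined cycle. Consequently $|C^{\star}| \ge \sum_{C \in S'} |C| \ge \frac{2n}{3}+1$, and since the longest cycle in an $n$-vertex graph has length at most $n$, this yields a $\frac{2}{3}$-approximation.

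The main obstacle I expect is the linear-time bound. For the cycle-set construction, I would argue that Algorithm \ref{alg:1} does $O(n)$ total work because its recursive calls operate on vertex-disjoint subgraphs whose sizes sum to at most $n$, and the per-call tasks (tracing the boundary cycle of a solid grid graph, extracting the connected components of $G \setminus C$, and decomposing each component into its maximal $2$-connected subgraphs) can each be performed in time linear in the size of the current subgraph by standard planar traversals and a block-cut-tree computation. For the post-processing, I would note that the diamond-conversion pass of Figure \ref{fig:convertdiamond} and the merging sweep of Lemma \ref{lem:mergeall3} each inspect every vertex and edge a constant number of times, so both phases also run in $O(n)$ total time. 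Combining these observations gives a linear-time algorithm producing a cycle of length at least $\frac{2n}{3}+1$, which establishes the theorem.
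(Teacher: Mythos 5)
Your proposal is correct and follows essentially the same route as the paper: build $S$ via Algorithm \ref{alg:1}, invoke Lemma \ref{lem:sumcycle} for the $\frac{2n}{3}+1$ bound, convert diamonds to form $S'$, merge via Lemma \ref{lem:mergeall3}, and argue linearity by charging work to disjoint subgraphs. The only detail you omit is that for an input that is not $2$-connected one must first pass to a largest $2$-connected subgraph (the longest cycle lies in a single block), which the paper's proof states explicitly.
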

\begin{proof}
The desired approximation algorithm is as follows.
Let $G$ be a largest 2-connected subgraph of a given solid grid graph.
First, construct the cycles set $S$ using Algorithm \ref{alg:1}, then
convert its diamond-shaped cycles to non-diamond shape cycles and make $S'$ as described before.
Constructive proof of Lemma \ref{lem:mergeall3} gives a method for merging all the cycles of $S'$,
and Lemma \ref{lem:sumcycle} ensure that the constructed long cycle contains at least two third
of the vertices of $G$.
We complete our proof arguing that the introduced approximation algorithm
can be implemented in linear time.
The boundary cycle $C$ of $G$ can be found in time $|C|$,
and by only checking the boundary vertices of $G$,
one can construct  a maximal set of disjoint 2-connected components of $G\setminus C$.
Thus, the lines 3 and 4 of Algorithm \ref{alg:1} can be implemented in time O$(|C|)$.
Moreover, except the recursive calls, the other lines of the algorithm
can be implemented in constant time.
Therefore, Algorithm \ref{alg:1} can be implemented such that
$FindCycleSet(G)$ runs in time O$(|S|)$.
The other steps of our algorithm, i.e.
constructing cycle set $S'$ from $S$,
finding the flat vertices of cycles of $S'$ and
merging the cycles of $S'$
can be implemented in linear time.
Thus, the total running time of the algorithm is O($|S|$)
which is linear with respect to the size of $G$.

\end{proof}

%%%%%%%%%%%%%%%%%%%%%%%%%%%%%%%%%%%%%%%%%%%%%%%%%
%%%%%%%%%%%%%%%%%%%%%%%%%%%%%%%%%%%%%%%%%%%%%%%%%
%%%%%%%%%%%%%%%%%%%%%%%%%%%%%%%%%%%%%%%%%%%%%%%%%
\section{Conclusions}
\label{sec:concl}
We introduced a linear-time approximation algorithm that, given a 2-connected, $n$-node solid grid graph,
can find a cycle containing at least two third of its vertices.
Since, cycles are 2-connected, our algorithm is a constant-factor approximation for
the longest cycle problem in solid grid graphs.
In other words, if the given solid grid graph $G$ is not 2-connected,
one can apply our algorithm to the largest 2-connected subgraph of $G$ to
find a cycle of the length at least two third of the length of the longest cycle of $G$.
A better approximation ratio for the longest cycle problem in solid grid graphs
or the longest path problem in this class of graphs can be the subject of future work.
\bibliographystyle{plain}
\bibliography{mybib}

\begin{thebibliography}{10}

\bibitem{alon1995color}
Noga Alon, Raphael Yuster, and Uri Zwick.
\newblock Color-coding.
\newblock {\em Journal of the ACM (JACM)}, 42(4):844--856, 1995.

\bibitem{bansal2004further}
Nikhil Bansal, Lisa~K Fleischer, Tracy Kimbrel, Mohammad Mahdian, Baruch
  Schieber, and Maxim Sviridenko.
\newblock Further improvements in competitive guarantees for qos buffering.
\newblock In {\em Proceedings of the International Colloquium on Automata,
  Languages and Programming}, pages 196--207. Springer, 2004.

\bibitem{bjorklund2003finding}
Andreas Bj{\"o}rklund and Thore Husfeldt.
\newblock Finding a path of superlogarithmic length.
\newblock {\em SIAM Journal on Computing}, 32(6):1395--1402, 2003.

\bibitem{bjorklund2004approximating}
Andreas Bj{\"o}rklund, Thore Husfeldt, and Sanjeev Khanna.
\newblock Approximating longest directed paths and cycles.
\newblock In {\em Proceedings of the International Colloquium on Automata,
  Languages and Programming}, pages 222--233. Springer, 2004.

\bibitem{bulterman2002computing}
R.~W. Bulterman, F.~W. van~der Sommen, G.~Zwaan, T.~Verhoeff, A.~J.~M. van
  Gasteren, and W.~H.~J. Feijen.
\newblock On computing a longest path in a tree.
\newblock {\em Information Processing Letters}, 81(2):93--96, 2002.

\bibitem{chen2006approximating}
Guantao Chen, Zhicheng Gao, Xingxing Yu, and Wenan Zang.
\newblock Approximating longest cycles in graphs with bounded degrees.
\newblock {\em SIAM Journal on Computing}, 36(3):635--656, 2006.

\bibitem{feder2005finding}
Tom{\'a}s Feder and Rajeev Motwani.
\newblock Finding large cycles in hamiltonian graphs.
\newblock In {\em Proceedings of the 16th annual ACM-SIAM symposium on Discrete
  Algorithms}, pages 166--175. Society for Industrial and Applied Mathematics,
  2005.

\bibitem{gabow2007finding}
Harold~N Gabow.
\newblock Finding paths and cycles of superpolylogarithmic length.
\newblock {\em SIAM Journal on Computing}, 36(6):1648--1671, 2007.

\bibitem{gabow2008finding}
Harold~N Gabow and Shuxin Nie.
\newblock Finding a long directed cycle.
\newblock {\em ACM Transactions on Algorithms (TALG)}, 4(1):7:1--7:21, 2008.

\bibitem{gabow2008finding2}
Harold~N. Gabow and Shuxin Nie.
\newblock Finding long paths, cycles and circuits.
\newblock In {\em Proceedings of the 19th annual International Symposium on
  Algorithms and Computation}, pages 752--763. Springer, 2008.

\bibitem{gutin1993finding}
G.~Gutin.
\newblock Finding a longest path in a complete multipartite digraph.
\newblock {\em SIAM Journal on Discrete Mathematics}, 6(2):270--273, 1993.

\bibitem{ioannidou2009longest}
Kyriaki Ioannidou, George~B Mertzios, and Stavros~D Nikolopoulos.
\newblock The longest path problem is polynomial on interval graphs.
\newblock In {\em Proceedings of 34th International Symposium on Mathematical
  Foundations of Computer Science}, pages 403--414. Springer, 2009.

\bibitem{itai1982hamiltonian}
A.~Itai, C.H. Papadimitriou, and J.L. Szwarcfiter.
\newblock {Hamiltonian paths in grid graphs}.
\newblock {\em SIAM Journal on Computing}, 11(4):676--686, 1982.

\bibitem{karger1997approximating}
David Karger, Rajeev Motwani, and GDS Ramkumar.
\newblock On approximating the longest path in a graph.
\newblock {\em Algorithmica}, 18(1):82--98, 1997.

\bibitem{keshavarz2010longestingrid}
F.~Kehsavarz~Kohjerdi, A.~Bagheri, and A.~Asgharian~Sardroud.
\newblock {A linear-time algorithm for the longest path problem in rectangular
  grid graphs}.
\newblock {\em Discrete Applied Mathematics}, 160(3):210--217, 2012.

\bibitem{mertzios2012simple}
George~B Mertzios and Derek~G Corneil.
\newblock A simple polynomial algorithm for the longest path problem on
  cocomparability graphs.
\newblock {\em SIAM Journal on Discrete Mathematics}, 26(3):940--963, 2012.

\bibitem{uehara2007computing}
R.~Uehara and Y.~Uno.
\newblock On computing longest paths in small graph classes.
\newblock {\em International Journal of Foundations of Computer Science},
  18(05):911--930, 2007.

\bibitem{uehara2005efficient}
Ryuhei Uehara and Yushi Uno.
\newblock Efficient algorithms for the longest path problem.
\newblock In {\em Proceedings of the 15th annual International Symposium on
  Algorithms and Computation}, pages 871--883. Springer, 2004.

\bibitem{umans1997hamiltonian}
C.~Umans and W.~Lenhart.
\newblock {Hamiltonian cycles in solid grid graphs}.
\newblock In {\em Proceedings of 38th Annual Symposium on Foundations of
  Computer Science}, pages 496--505. IEEE, 1997.

\bibitem{zhang2011approximating}
W.~Zhang and Y.~Liu.
\newblock Approximating the longest paths in grid graphs.
\newblock {\em Theoretical Computer Science}, 412(39):5340--5350, 2011.

\end{thebibliography}
\end{document}